\documentclass{stacs_proc}
\usepackage{epsfig}

\newcommand{\sort}{\mathrm{sort}}
\newcommand{\scan}{\mathrm{scan}}

\begin{document}

\title{On Dynamic Breadth-First Search in External-Memory}
\author[lab1]{U. Meyer}{Ulrich Meyer}

\thanks{Partially supported by the DFG grant ME 3250/1-1, and by the center of massive data algorithmics (MADALGO) funded by the Danish National Research Foundation.}

\address[lab1]{Institute for Computer Science\\
J.~W.~Goethe University\\
60325 Frankfurt/Main, Germany}

\email{umeyer@ae.cs.uni-frankfurt.de}

\keywords{External Memory, Dynamic Graph Algorithms, BFS, Randomization}
\subjclass{F.2.2}

\begin{abstract}
We provide the first non-trivial result on dynamic breadth-first search (BFS) in external-memory:
For general sparse undirected graphs of initially $n$ nodes and $O(n)$ edges
and monotone update sequences of either 
$\Theta(n)$ edge insertions or $\Theta(n)$ edge deletions, we prove an amortized 
high-probability bound of $O(n/B^{2/3}+\sort(n)\cdot \log B)$ I/Os per update. In contrast, the currently
best approach for static BFS on sparse undirected graphs requires 
$\Omega(n/B^{1/2}+\sort(n))$ I/Os.

\end{abstract}

\maketitle

\stacsheading{2008}{551-560}{Bordeaux}
\firstpageno{551}

\section{Introduction}
Breadth first search (BFS) is a fundamental graph traversal strategy. It can also be viewed as computing single source shortest paths on unweighted graphs. It decomposes the input graph $G=(V,E)$ of $n$ nodes and $m$ edges into at most $n$ levels where level~$i$ comprises all nodes that can be reached from a designated source~$s$ via a path of $i$~edges, but cannot be reached using less than $i$ edges. 

The objective of a dynamic graph algorithm is to efficiently process an online sequence of
update and query operations; see \cite{EppGalIta-ATCH-99,Roditty:2006} 
for overviews of classic and recent results.
In our case we consider BFS under a sequence of either $\Theta(n)$ edge insertions, 
but not deletions  ({\em incremental} version) or  $\Theta(n)$ edge deletions, 
but not insertions ({\em decremental} version). After each edge insertion/deletion
the updated BFS level decomposition has to be output.

\subsection{Computation models.}
We consider the commonly accepted external-memory (EM) model of Aggarwal and Vitter \cite{AGG_VIT}. It assumes a two level memory hierarchy with faster internal memory having a capacity to store $M$ vertices/edges. In an I/O operation, one block of data, which can store $B$ vertices/edges, is transferred between disk and internal memory.  The measure of performance of an algorithm is the number of I/Os it performs.  The number of I/Os needed to read $N$ contiguous items from disk is $\scan(N) = \Theta(N/B)$.  The number of I/Os required to sort $N$ items is $\sort(N)=\Theta((N/B)\log_{M/B}(N/B))$. For all realistic values of $N$, $B$, and $M$, $\scan(N)<\sort(N)\ll N$. 

There has been a significant number of publications on external-memory graph
algorithms; see~\cite{Alg_Mem,VIT1} for recent overviews. However, we are not aware of
any dynamic graph algorithm in the fully external-memory case (where
$|V| > M$).

\subsection{Results.}
We provide the first non-trivial result on dynamic BFS in external-memory.
For general sparse undirected graphs of initially $n$ nodes and $O(n)$ edges
and either $\Theta(n)$ edge insertions or $\Theta(n)$ edge deletions, we prove an amortized 
high-probability bound of $O(n/B^{2/3}+\sort(n)\cdot \log B)$ I/Os per update. 
In contrast, the currently
best bound for static BFS on sparse undirected graphs is 
$O(n/B^{1/2}+\sort(n))$ I/Os~\cite{MM_BFS}. 

Also note that for general sparse graphs and  worst-case monotone sequences of $\Theta(n)$ updates 
in {\em internal-memory} there is asymptotically no better solution 
than performing $\Theta(n)$ runs of the linear-time static BFS algorithm,
even if after each update we are just required to report the changes in the BFS tree
(see Fig.~\ref{bezug} for an example).
In case $\Omega(n/B^{1/2}+\sort(n))$ I/Os should prove to be a lower bound for static
BFS in external-memory, then our result yields an interesting differentiator
between static vs. dynamic BFS in internal and external memory.
\begin{figure}[htpb]
\begin{center}
\epsfig{figure=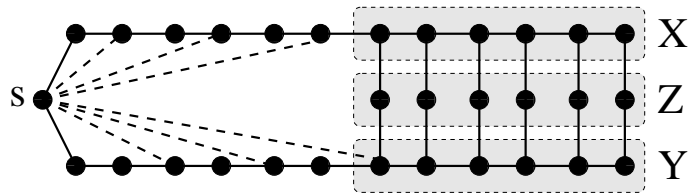,height=2.5cm,angle=0}
\caption{\label{bezug}Example for a graph class where each update requires $\Omega(n)$ changes
in the BFS tree: inserting new (dashed) edges alternatingly shortcut the distances 
from $s$ to {\bf X} and $s$ to {\bf Y}. As a result, in the updated BFS tree
the parents of all vertices in {\bf Z} keep on changing between {\bf X} and~{\bf Y}.}
\end{center}
\end{figure} 
\subsection{Organization of the paper.}
In Section~\ref{static} we will review known BFS algorithms for static undirected graphs.
Then we consider traditional and new external-memory methods for graph clustering (Section~\ref{clusters}). Subsequently, in Section~\ref{algorithm} we provide the
new algorithm and analyze it in Section~\ref{analysis}. Final remarks concerning extensions
and open problems are given in Sections~\ref{extensions} and~\ref{conclusions}, respectively.

\section{Review of Static BFS Algorithms}
\label{static}
\noindent{\bf Internal-Memory.} BFS is well-understood in the RAM model. There exists a simple linear time algorithm \cite{Cormen} (hereafter referred as IM\_BFS) for the BFS traversal in a graph. IM\_BFS keeps a set of appropriate candidate nodes for the next vertex to be visited in a FIFO queue $Q$. Furthermore, in order to find out the unvisited neighbors of a node from its adjacency list, it marks the nodes as either visited or unvisited. 
\par Unfortunately, as the storage requirements of the graph starts approaching the size of the internal memory, the running time of this algorithm deviates significantly from the predicted $O(n+m)$ asymptotic performance of the RAM model: checking whether edges lead to already visited nodes altogether needs $\Theta(m)$ I/Os in the worst case; unstructured indexed access to adjacency lists may add another $\Theta(n+m/B)$ I/Os.\\

\noindent {\bf EM-BFS for dense undirected graphs.} The algorithm by Munagala and Ranade \cite{MR_BFS} (referred as MR\_BFS) ignores the second problem but addresses the first by exploiting the fact that the neighbors of a node in BFS level $t-1$ are all in BFS levels $t-2$, $t-1$ or $t$. Let $L(t)$ denote the set of nodes in BFS level $t$, and let $A(t)$ be the multi-set of neighbors of nodes in $L(t-1)$. Given $L(t-1)$ and $L(t-2)$, MR\_BFS builds $L(t)$ as follows: Firstly, $A(t)$ is created by $|L(t-1)|$ random accesses to get hold of the adjacency lists of all nodes in $L(t-1)$. Thereafter, duplicates are removed from $A(t)$ to get a sorted set $A'(t)$. This is done by sorting $A(t)$ according to node indices, followed by a scan and compaction phase. The set $L(t):=A'(t)\ \setminus \{L(t-1) \cup L(t-2)\}$ is computed by scanning ``in parallel'' the sorted sets of $A'(t), L(t-1)$, and $L(t-2)$ to filter out the nodes already present in $L(t-1)$ or $L(t-2)$. The resulting worst-case I/O-bound is $O\left(\sum_t L(t) + \sum_t \sort(A(t))\right) = O \left(n+\sort(n+m)\right)$.
The algorithm outputs a BFS-level decomposition of the vertices, which can be easily
transformed into a BFS tree using $O(\sort(n+m))$ I/Os \cite{BGVW00}.
\\

\noindent {\bf EM-BFS for sparse undirected graphs.}
Mehlhorn and Meyer suggested another approach \cite{MM_BFS} (MM\_BFS) which involves a preprocessing phase to restructure the adjacency lists of the graph representation. It groups the vertices of the input graph into disjoint clusters of small diameter in $G$ and stores the adjacency lists of the nodes in a cluster contiguously on the disk. Thereafter, an appropriately modified version of MR\_BFS is run. MM\_BFS exploits the fact that whenever the first node of a cluster is visited then the remaining nodes of this cluster will be reached soon after. By spending only one random access (and possibly, some sequential accesses depending on cluster size) to load the whole cluster and then keeping the cluster data in some efficiently accessible data structure (pool) until it is all processed, on sparse graphs the total amount of I/Os can be reduced by a factor of up to $\sqrt{B}$: the neighboring nodes of a BFS level can be computed simply by scanning the pool and not the whole graph. Though some edges may be scanned more often in the pool, unstructured I/Os to fetch adjacency lists is considerably reduced, thereby reducing the total number of I/Os. \\

\section{Preprocessing}
\label{clusters}

\subsection{Traditional preprocessing within MM\_BFS.} 
Mehlhorn and Meyer \cite{MM_BFS} 
proposed the algorithms MM\_BFS\_R and MM\_BFS\_D, out of which the first is randomized and
the second is deterministic. 
In MM\_BFS\_R, the partitioning is generated ``in parallel rounds'':
after choosing master nodes independently and uniformly at random,
in each round, each master node tries to capture all unvisited neighbors of its current sub-graph into its partition, with ties being resolved arbitrarily. 

A similar kind of randomized preprocessing is also applied in parallel~\cite{Ullman91} and
streaming~\cite{Tradeoff_Streaming} settings. There, however, a dense compressed
graph among the master nodes is produced, causing rather high parallel work or
large total streaming volume, respectively.

The MM\_BFS\_D variant first builds a spanning tree $T_s$ for the connected component of $G$ that contains the source node. Arge et al.~\cite{Ext_MST} show an upper bound of 
$O((1+\log{\log{(B \cdot n/m)}})\cdot \sort(n+m))$ I/Os for computing such a spanning tree.
Each undirected edge of $T_s$ is then replaced by two oppositely directed edges. Note that a bi-directed tree always has at least one Euler tour. In order to construct the Euler tour around this bi-directed tree, each node chooses a cyclic order \cite{Euler} of its neighbors. The successor of an incoming edge is defined to be the outgoing edge to the next node in the cyclic order. The tour is then broken at the source node and the elements of the resulting list are then stored in consecutive order using an external memory list-ranking algorithm; 
Chiang et al.~\cite{Chi_Ext_gr_alg} showed how to do this in sorting complexity. Thereafter, we chop the Euler tour into {\em chunks} of $\max\{1,\sqrt{\frac{n \cdot B}{n+m}}\}$ nodes and remove duplicates such that each node only remains in the first 
chunk it originally occurs; again this requires a couple of sorting steps.  
The adjacency lists are then re-ordered based on the position of their corresponding nodes in the chopped duplicate-free Euler tour: all adjacency lists for nodes in the same chunks form
a cluster and the distance in $G$ between any two vertices whose adjacency-lists belong
to the same cluster is bounded by $\max\{1,\sqrt{\frac{n \cdot B}{n+m}}\}$.

\subsection{Modified preprocessing for dynamic BFS.}
\label{ssec:modified:preprocessing}
The preprocessing methods for the static BFS in \cite{MM_BFS} may produce very unbalanced
clusters: for example, with MM\_BFS\_D using chunk size $1 < \mu < O(\sqrt{B})$ there may be 
$\Omega(n/\mu)$ clusters being in charge of only $O(1)$ adjacency-lists each.
For the dynamic version, however, we would like  to argue that each random access to a 
cluster not visited so far provides us with $\Omega(\mu)$ new adjacency-lists.
Unfortunately, finding such a clustering I/O-efficiently seems to be quite hard.
Therefore, we shall already be satisfied with an Euler tour based randomized construction 
ensuring that the {\em expected} number of adjacency-lists kept in all but 
one\footnote{The last chunk of the Euler tour only visits $((2\cdot n' -1) \bmod \mu)+1$ vertices
where $n'$ denotes the number of vertices in the connected component of the starting node~$s$.} 
clusters is $\Omega(\mu)$.

The preprocessing from MM\_BFS\_D is modified as follows: each vertex~$v$
in the spanning tree $T_s$ is assigned an independent binary random number $r(v)$ with 
$\mathbf{P}[r(v)=0]= \mathbf{P}[r(v)=1]=1/2$. When removing duplicates from the Euler
tour, instead of storing $v$'s adjacency-list in the cluster related to the chunk with
the {\em first} occurrence of a vertex~$v$, now we only stick to its
first occurrence iff $r(v)=0$ and otherwise ($r(v)=1$) store $v$'s adjacency-list
in the cluster that corresponds to the {\em last} chunk of the Euler tour $v$ appears in.
For leaf nodes $v$, there is only one occurrence on the tour, hence the value of $r(v)$
is irrelevant. Obviously, each adjacency-lists is stored only once. Furthermore, 
the modified procedure maintains all good properties of the standard preprocessing
within MM\_BFS\_D like guaranteed bounded distances of $O(\mu)$ in $G$ between the vertices 
belonging to the same cluster and $O(n/\mu)$ clusters overall.   

\begin{lemma}
For chunk size $\mu>1$ and each but the last chunk, the expected number of adjacency-lists 
kept is at least $\mu/8$.
\end{lemma}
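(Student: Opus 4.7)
The plan is to write the expected number of adjacency-lists kept by a fixed non-last chunk $C$ in closed form via linearity of expectation, and then to bound it by a simple accounting of Euler-tour edges inside $C$.

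For every vertex $v$ visited by the Euler tour, let $f(v)$ and $l(v)$ denote the positions of its first and last occurrence (with $f(v)=l(v)$ for leaves), and let $q_v$ be the probability that $v$'s adjacency list is stored in $C$. By construction, $v$'s list is placed in the chunk containing $f(v)$ when $r(v)=0$ and in the chunk containing $l(v)$ when $r(v)=1$. Since $r(v)$ is an independent fair coin,
\[
q_v \;=\; \frac{1}{2}\,\mathbf{1}[f(v)\in C] \;+\; \frac{1}{2}\,\mathbf{1}[l(v)\in C],
\]
which also gives the correct value $1$ whenever $v$ is a leaf whose unique occurrence lies in $C$. Writing $L_C$ for the number of adjacency-lists kept by $C$ and summing $q_v$ over all $v$ by linearity of expectation,
\[
\mathbf{E}[L_C] \;=\; \frac{1}{2}\bigl(N_f(C)+N_l(C)\bigr),
\]
where $N_f(C)$ and $N_l(C)$ count the vertices whose first, respectively last, occurrence lies in $C$.

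Next I would use the DFS character of the Euler tour to lower-bound $N_f(C)+N_l(C)$. The $\mu$ consecutive tour positions of $C$ are joined by $\mu-1$ internal steps, and each such step is either a parent-to-child traversal (``down'') or a child-to-parent traversal (``up''). A down-step from position $p$ to $p+1$ makes $p+1$ the first occurrence of its vertex, and an up-step from $p$ to $p+1$ makes $p$ the last occurrence of its vertex. Because each vertex has at most one first and one last occurrence, different internal steps contribute to different terms of the sum $N_f+N_l$, yielding $N_f(C)+N_l(C)\ge \mu-1$.

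Combining the two steps gives $\mathbf{E}[L_C]\ge (\mu-1)/2\ge \mu/4\ge \mu/8$ for any $\mu\ge 2$. The main point to verify carefully is the leaf case: when $v$ is a leaf with $f(v)=l(v)\in C$, it contributes $1$ to both $N_f(C)$ and $N_l(C)$, which is consistent with its coefficient $\frac{1}{2}+\frac{1}{2}=1$ in the sum for $q_v$, and with the fact that both the down-step into the leaf and the up-step out of it produce (different) contributions to $N_f+N_l$. Boundary positions at the start or end of $C$, if any, only increase $N_f+N_l$ and do not weaken the bound, so no global property of the spanning tree is needed beyond the purely local fact that every Euler-tour step goes either up or down.
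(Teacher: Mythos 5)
Your proof is correct and follows essentially the same route as the paper: both reduce the expected number of lists kept to half the number of first/last occurrences landing in the chunk (via the fair coin $r(v)$ and linearity of expectation) and then lower-bound that count using the structure of the Euler tour. The only difference is in the counting step --- the paper bounds the number of intermediate visits by noting each is sandwiched between a last visit and a first visit of children, giving $a+b\ge\lfloor\mu/2\rfloor$ and hence $\mu/8$, whereas your charging of each of the $\mu-1$ internal tour steps to a distinct first- or last-occurrence event gives the slightly sharper $N_f+N_l\ge\mu-1$ and hence $(\mu-1)/2\ge\mu/4$; both comfortably establish the stated bound.
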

\begin{proof} Let $R=(v_1, \ldots, v_{\mu})$ be the sequence of vertices visited by an
arbitrary chunk ${\mathcal R}$ of the Euler tour ${\mathcal T}$, excluding the last chunk. 
Let $a$ be the number of entries in $R$ that represent first or last visits of 
inner-tree vertices from the spanning tree $T_s$ on ${\mathcal T}$. 
These $a$ entries account for an expected
number of $a/2$ adjacency-lists actually stored and kept in~${\mathcal R}$. 
Note that if for some vertex $v \in {\mathcal T}$ 
both its first and last visit happen within ${\mathcal R}$, then $v$'s adjacency-list 
is kept with probability one.  Similarly, if there are any visits of leaf nodes 
from $T_s$ within ${\mathcal R}$, then their adjacency-lists are kept for sure; 
let $b$ denote the number of these leaf node entries in $R$. 
What remains are $\mu-a-b$ {\em intermediate} (neither first nor last) visits of vertices 
within ${\mathcal R}$; they do not contribute any additional adjacency-lists.

We can bound $\mu-a-b$ using the observation that any intermediate visit of a tree node~$v$ on
${\mathcal T}$ is preceded by a last visit of a child $v'$ of $v$ and proceeded by a
first visit of another child $v''$ of $v$. Thus, $\mu-a-b \le \lceil \mu/2\rceil$, that is
$a+b \ge \lfloor \mu/2 \rfloor$, which implies that the expected number of distinct adjacency-lists being kept for ${\mathcal R}$ is at least $\lfloor \mu/2 \rfloor/2 \ge \mu/8$. 
\end{proof}

\section{The Dynamic Incremental Algorithm}
\label{algorithm}
In this section we concentrate on the incremental version for sparse graphs with 
$\Theta(n)$ updates where each update
inserts an edge. Thus, BFS levels can only decrease over time. 
Before we start, let us fix some notation:
for $i\ge 1$, $G_i=(V,E_i)$ is to denote the graph after the $i$-th update, 
$G_0$ is the initial graph. Let $d_i(v)$, $i\ge 0$, stand for the BFS level of node $v$ 
if it can be reached from the source node~$s$ in $G_i$ and $n$ otherwise. 
Furthermore, for $i \ge 1$, let $\Delta d_i(v)=| d_{i-1}(v) - d_i(v)|$. 
The main ideas of our approach are as follows: 

{\bf Checking Connectivity; Type A updates.}
In order to compute the BFS levels for $G_i$, $i \ge 1$, we first 
run an EM connected components algorithm (for example the one in \cite{MR_BFS}
taking $O(\sort(n) \cdot \log B)$ I/Os) in order to 
check, whether the insertion of the $i$-th edge~$(u,v)$ enlarges the connected 
component ${\mathcal C}_s$ of the source vertex~$s$. 
If yes (let us call this a {\em Type A update}), then w.l.o.g. let $u\in {\mathcal C}_s$
and let ${\mathcal C}_v$ be the connected component that comprises~$v$. 
The new edge $(u,v)$ is then the only connection between the existing BFS-tree for~$s$ and 
${\mathcal C}_v$. Therefore, we can simply run MR\_BFS on the subgraph $G'$ defined by
the vertices in ${\mathcal C}_v$ with source~$v$ and
add $d_{i-1}(u)+1$ to all distances obtained. This
takes $O(n_v+ \sort(n))$ I/Os where $n_v$ denotes the number of
vertices in ${\mathcal C}_v$.

If the $i$-th update does not merge ${\mathcal C}_s$ with some other connected component 
but adds an edge within ${\mathcal C}_s$ ({\em Type B update}) then we need to do something more fancy:

{\bf Dealing with small changes; Type B updates.} Now for computing the BFS levels
for $G_i$, $i \ge 1$, we pre-feed the adjacency-lists into a sorted pool ${\mathcal H}$
according to the BFS levels of their respective vertices in 
$G_{i-1}$ using a certain advance $\alpha >1$, i.e., the adjacency list
for $v$ is added to ${\mathcal H}$ when creating BFS level $\max\{0,d_{i-1}(v)-\alpha\}$ of $G_i$.
This can be done I/O-efficiently as follows. First we extract the adjacency-lists for vertices
having BFS levels up to $\alpha$ in $G_{i-1}$ and put them to ${\mathcal H}$ where they are kept sorted by node indices. From the remaining adjacency-lists we build a sequence~${\mathcal S}$ by
sorting them according to BFS levels in $G_{i-1}$ (primary criterion) and node indices 
(secondary criterion). For the construction of each new BFS level of $G_i$ we 
merge a subsequence of ${\mathcal S}$ accounting for one BFS level in $G_{i-1}$
with ${\mathcal H}$ using simple scanning. 
 
Therefore, if $\Delta d_i(v) \le \alpha$ for all $v \in V$ then all adjacency-lists will
be added to ${\mathcal H}$ in time and can be consumed from there without random I/O. 
Each adjacency-list is scanned at most once in ${\mathcal S}$ and at most
$\alpha$ times in ${\mathcal H}$. Thus, if 
$\alpha = o(\sqrt{B})$ this approach causes less I/O than MM\_BFS.
 
{\bf Dealing with larger changes.} 
Unfortunately, in general, there may be vertices~$v$ with $\Delta d_i(v) > \alpha$. 
Their adjacency-lists are not prefetched into ${\mathcal H}$ early enough and therefore have to
be imported into ${\mathcal H}$ using random I/Os to whole clusters
just like it is done in MM\_BFS. However, we apply the modified clustering
procedure described in Section~\ref{ssec:modified:preprocessing}  on $G_{i-1}$, the graph
without the $i$-th new edge (whose connectivity is the same as that of $G_i$) 
with chunk size~$\alpha/4$. 

Note that this may result in $\Theta(n/\alpha)$ cluster accesses, which would be prohibitive for
small~$\alpha$. Therefore we restrict the number of random cluster accesses to 
$\alpha \cdot n/B$. If the dynamic algorithm does not succeed within these bounds
then it increases $\alpha$ by a factor of two, computes a new clustering for $G_{i-1}$
with larger chunk size and starts a {\em new attempt} by repeating the whole approach with the increased parameters. Note that we do not need to recompute the spanning tree for the 
for the second, third, $\ldots$ attempt.

{\bf At most $O(\log B)$ attempts per update.} 
The $j$-th attempt, $j \ge 1$, of the dynamic approach
to produce the new BFS-level decomposition will apply an advance of 
$\alpha_j:=32 \cdot 2^{j}$ and recompute the modified clustering for $G_{i-1}$ using
chunk size $\mu_j:=8\cdot 2^j$.  
Note that there can be at most 
$O(\log \sqrt{B})=O(\log B)$ failing attempts for each edge update 
since by then our approach allows sufficiently
many random accesses to clusters so that all of them can be loaded explicitly
resulting in an I/O-bound comparable to that of static MM\_BFS.
In Section~\ref{analysis}, however, we will argue that for most edge updates within a longer sequence, 
the advance value and the chunk size value for the succeeding attempt are bounded 
by $O(B^{1/3})$ implying significantly improved I/O performance.

{\bf Restricting waiting time in ${\mathcal H}$.}
There is one more important detail to take care of: when adjacency-lists are brought
into ${\mathcal H}$ via explicit cluster accesses (because of insufficient advance $\alpha_j$ in the prefetching), these adjacency-lists will re-enter ${\mathcal H}$ once more later on during 
the (for these adjacency-lists by then useless) prefetching. Thus, in order to make sure
that unnecessary adjacency-lists do not stay in ${\mathcal H}$ forever, each entry in ${\mathcal H}$
carries a time-stamp ensuring that superfluous adjacency-lists are evicted from
${\mathcal H}$ after at most $\alpha_j=O(2^j)$ BFS levels.

\begin{lemma}
\label{lemma:single:typeB}
For sparse graphs with $O(n)$ updates, each Type B update succeeding during the
$j$-th attempt requires $O(2^j \cdot n/B + \sort(n) \cdot \log B)$ I/Os.
\end{lemma}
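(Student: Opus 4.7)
The plan is to decompose the total I/O cost of a single succeeding Type B update into (a) the one-time connectivity pre-check, and (b) the cost of each of the attempts $k=1,2,\ldots,j$, and then sum these contributions. Within each attempt, I would account separately for (i) rebuilding the modified clustering of $G_{i-1}$ with chunk size $\mu_k=8\cdot 2^k$, (ii) preparing the prefetch sequence $\mathcal{S}$ and initially populating the pool $\mathcal{H}$, (iii) constructing the new BFS levels by scanning $\mathcal{H}$ and $\mathcal{S}$, and (iv) performing up to $\alpha_k\cdot n/B$ random cluster accesses with $\alpha_k=32\cdot 2^k$.

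For the connectivity check I would simply plug in the $O(\sort(n)\cdot\log B)$ bound of \cite{MR_BFS} cited in the algorithm description; this is paid once per update. For step (i), I would exploit the fact that across attempts within the same Type B update the spanning tree $T_s$ of $G_{i-1}$ never changes (the new edge lies inside $\mathcal{C}_s$), so each attempt only needs to re-chop the existing Euler tour into chunks of the new size $\mu_k$, resample the independent bits $r(v)$, and reorder the adjacency-lists; all of these steps reduce to a constant number of sorting and scanning passes and cost $O(\sort(n))$ I/Os. Step (ii) is likewise a sort of the adjacency-lists by their $G_{i-1}$ BFS level (with node index as secondary key) and therefore also $O(\sort(n))$.

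The core estimate concerns (iii) and (iv). For (iii), the time-stamp eviction rule keeps every adjacency-list in $\mathcal{H}$ for at most $\alpha_k$ of the constructed BFS levels, so summed over all levels the cost of scanning $\mathcal{H}$ is $O(\alpha_k\cdot n/B)$; scanning the appropriate slice of $\mathcal{S}$ each round contributes an additional $O(n/B)$ in total. For (iv), the enforced cap of $\alpha_k\cdot n/B$ random cluster loads, combined with the fact that in the succeeding attempt $\alpha_j=O(\sqrt{B})$ (otherwise the MM\_BFS-style explicit loading of all $O(n/\mu_j)$ clusters would already have succeeded earlier), gives each individual cluster of $O(\mu_k)$ adjacency-lists a loading cost of $O(1+\mu_k/B)=O(1)$ I/Os, for a total of $O(\alpha_k\cdot n/B)$ I/Os in attempt $k$. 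Hence one attempt costs $O(\alpha_k\cdot n/B+\sort(n))$ I/Os, and summing geometrically over $k=1,\ldots,j$ with $\alpha_k=\Theta(2^k)$ yields
\[
O\!\left(\sum_{k=1}^{j} 2^k\cdot n/B + j\cdot\sort(n)\right)=O\!\left(2^j\cdot n/B+\sort(n)\cdot\log B\right),
\]
which, together with the one-time connectivity check, proves the claim.

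The main obstacle I anticipate is the accounting for (iii): one has to show that an adjacency-list which is first imported into $\mathcal{H}$ by a cluster access (because the prefetching advance $\alpha_k$ was too small for that vertex) and is later re-inserted by the regular prefetch does not linger in $\mathcal{H}$ long enough to distort the ``scanned at most $\alpha_k$ times'' budget. This is exactly what the time-stamp eviction after $\alpha_k$ BFS levels is engineered to enforce, but verifying that the pool can be maintained under this rule (including insertions, lookups, and timed evictions) at an amortized cost of $O(n/B)$ I/Os per constructed BFS level is the delicate, though routine, part of the argument.
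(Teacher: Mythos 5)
Your proposal is correct and follows essentially the same route as the paper's proof: a one-time $O(\sort(n)\cdot\log B)$ connectivity/spanning-tree computation, plus per-attempt costs of $O(\sort(n))$ for re-clustering and initializing ${\mathcal S}$ and ${\mathcal H}$ and $O(2^k\cdot n/B)$ for scanning ${\mathcal H}$ and fetching clusters, summed geometrically so that the last attempt dominates the scanning/fetching term. The only detail the paper makes explicit that you leave implicit is the extra $O(\sort(n))$ per attempt to pre-sort explicitly loaded clusters before merging them into ${\mathcal H}$, which does not affect the bound.
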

\begin{proof}
Deciding whether a Type B update takes place essentially requires a connected components
computation, which accounts for $O(\sort(n) \cdot \log B)$ I/Os. Within this
I/O bound we can also compute a spanning tree $T_s$ of the component holding the starting
vertex~$s$ but excluding the new edge. Subsequently, there
are $j = O(\log B)$ attempts, each of which uses $O(\sort(n))$ I/Os to derive a new
modified clustering based on an Euler tour with increasing chunk sizes around $T_s$.
Furthermore, before each attempt we need to initialize~${\mathcal H}$ and ${\mathcal S}$, which
takes $O(\sort(n))$ I/Os per attempt. The worst-case number of I/Os to (re-) scan 
adjacency-lists in ${\mathcal H}$ or to explicitly fetch clusters of adjacency-lists doubles
after each attempt. Therefore it asymptotically suffices to consider the (successful) last
attempt~$j$, which causes $O(2^j \cdot n/B)$ I/Os. Furthermore, each attempt
requires another $O(\sort(n))$ I/Os to pre-sort explicitly loaded clusters before they
can be merged with ${\mathcal H}$ using a single scan just like in MM\_BFS.
Adding all contributions yields the claimed I/O bound of $O(2^j \cdot n/B + \sort(n) \cdot \log B)$ for sparse graphs. 
\end{proof}

\section{Analysis}
\label{analysis}

We split our analysis of the incremental BFS algorithm into two parts.
The first (and easy one) takes care of Type A updates:

\begin{lemma}
For sparse undirected graphs with $\Theta(n)$ updates, there are at most $n-1$ Type A updates 
causing $O(n \cdot \sort(n) \cdot \log B)$ I/Os in total.
\end{lemma}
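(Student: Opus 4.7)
The plan is to control two things independently: the number of Type A updates, and the per-update I/O cost, then sum.

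First I would bound the number of Type A updates. By definition, a Type A update at step $i$ merges the component $\mathcal{C}_s$ of the source in $G_{i-1}$ with some other connected component $\mathcal{C}_v$ via the new edge. Hence every Type A update strictly decreases the total number of connected components in the graph by one. Since the initial graph $G_0$ has at most $n$ connected components and the number of connected components is always at least $1$, at most $n-1$ Type A updates can occur over the entire update sequence, regardless of how many updates there are in total.

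Next I would bound the per-update I/O cost as already sketched in Section~\ref{algorithm}. For each update (Type A or not) we first invoke the EM connected components algorithm of~\cite{MR_BFS}, which costs $O(\sort(n)\cdot \log B)$ I/Os, in order to decide whether the update is of Type A at all. If it is, we run MR\_BFS on the subgraph induced by $\mathcal{C}_v$, costing $O(n_v + \sort(n))$ I/Os where $n_v = |\mathcal{C}_v|$, and add $d_{i-1}(u)+1$ to the returned distances using a single scan.

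Finally I would add up these costs over all (at most $n-1$) Type A updates. The connectivity checks contribute $O(n\cdot \sort(n)\cdot \log B)$ I/Os in total. The additive $O(\sort(n))$ term per Type A update sums to $O(n\cdot \sort(n))$, which is absorbed. The only slightly delicate term is $\sum n_v$, and here is the key observation: the components $\mathcal{C}_v$ merged into $\mathcal{C}_s$ across distinct Type A updates are pairwise disjoint (once a vertex enters $\mathcal{C}_s$ it stays there, since edges are only inserted), so $\sum_v n_v \le n$, contributing only $O(n)$ I/Os total. Adding everything yields the claimed overall bound of $O(n\cdot \sort(n)\cdot \log B)$. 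The argument is essentially bookkeeping; the only conceptual point to watch is the disjointness of the merged components, which is what keeps the MR\_BFS contributions from blowing up to $O(n^2)$.
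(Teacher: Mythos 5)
Your proposal is correct and follows essentially the same route as the paper's proof: the $O(\sort(n)\cdot\log B)$ connectivity check per update, plus the observation that the merged components are pairwise disjoint (each node enters $\mathcal{C}_s$ only once) so the MR\_BFS calls contribute only $O(n+\sort(n))$-type terms overall. Your explicit counting argument for the $n-1$ bound on the number of Type A updates is a small but welcome addition the paper leaves implicit.
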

\begin{proof}
Each Type A update starts with an EM connected components computation causing
$O(\sort(n) \cdot \log B)$ I/Os per update. Since each node can be added to the
connected component ${\mathcal C}_s$ holding the starting vertex~$s$ only once, the total
number of I/Os spend in calls to the MR-BFS algorithm on components to
be merged with ${\mathcal C}_s$ is $O(n+\sort(n))$. Producing the output takes
another $O(\sort(n))$ per update. 
\end{proof}

Now we turn to Type B updates:

\begin{lemma}
For sparse undirected graphs with $\Theta(n)$ updates, all Type B updates 
cause $O(n \cdot (n^{2/3} + \sort(n) \cdot \log B))$ I/Os in total with high
probability.
\end{lemma}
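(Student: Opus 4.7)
The plan is to sum the per-update estimate of Lemma~\ref{lemma:single:typeB} and then amortize the total ``advance'' used over all updates against the cumulative drop in BFS levels, which is the only global resource bounded by monotonicity.

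\emph{Reduction.}
Summing Lemma~\ref{lemma:single:typeB} over all $\Theta(n)$ Type B updates, the total cost is $O\bigl((n/B)\sum_i 2^{j^*_i}\bigr) + O(n\cdot\sort(n)\cdot\log B)$, where $j^*_i$ is the attempt that succeeds on update~$i$. The second term already matches the claim, so it remains to bound $\sum_i 2^{j^*_i}$ with high probability.

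\emph{Potential from monotonicity.}
Because incremental insertions never raise a BFS level, letting $\Delta_i = \sum_v (d_{i-1}(v) - d_i(v))$ and telescoping yields $\sum_i \Delta_i = O(n^2)$ (even after accounting for at most $n-1$ Type A updates that introduce fresh components of total level sum $O(n^2)$).

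\emph{Characterization of a failing attempt.}
If the $j$-th attempt fails on update~$i$, then the budget of $\alpha_j\cdot n/B$ random cluster accesses was exceeded, and every such access is blamed on at least one vertex $v$ with $\Delta d_i(v) > \alpha_j$---otherwise prefetching into $\mathcal{H}$ with advance $\alpha_j$ would have been timely. With $X_j(i)$ denoting the number of these bad vertices, we obtain the deterministic relation $X_j(i) > \alpha_j n/B$ and therefore $\Delta_i \ge X_j(i)\cdot \alpha_j > \alpha_j^2 n/B$. In particular, whenever $j^*_i > 1$ we have $\alpha_{j^*_i} = O\bigl(\sqrt{B\,\Delta_i/n}\bigr)$.

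\emph{Amortization.}
A naive application of Cauchy--Schwarz,
$\sum_i \alpha_{j^*_i} \le O\bigl(n + \sqrt{B/n}\,\sum_i \sqrt{\Delta_i}\bigr) \le O\bigl(\sqrt{B/n}\cdot\sqrt{n\cdot\sum_i\Delta_i}\bigr) = O(n\sqrt{B})$,
already yields a per-update cost of $O(n/\sqrt{B})$. To tighten this to $O(n/B^{2/3})$, I would exploit the randomized modified preprocessing (Section~\ref{ssec:modified:preprocessing}): because every non-last cluster holds $\Omega(\mu_j)$ adjacency-lists in expectation over the coin flips $r(v)$, a Chernoff bound (using that the coin flips are redrawn independently for each attempt) lets one argue that the number of cluster accesses in a failing attempt is at least $\Omega(X_j(i)/\mu_j)$ with probability $1-1/\mathrm{poly}(n)$. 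Combining this with the budget-exceeding condition upgrades the failure implication to $X_j(i) = \Omega(\alpha_j\mu_j\,n/B) = \Omega(\alpha_j^2 n/B)$ and hence $\Delta_i = \Omega(\alpha_j^3\,n/B)$. Plugging $\sum_i (2^{j^*_i})^3 = O(nB)$ into H\"older's inequality with exponents $3$ and $3/2$ yields
\[
\sum_i 2^{j^*_i} \;\le\; \Bigl(\sum_i (2^{j^*_i})^3\Bigr)^{1/3}\cdot n^{2/3} \;=\; O(nB^{1/3}),
\]
and multiplying by $n/B$ gives the claimed non-sort total of $O(n^2/B^{2/3})$. A union bound over the $n\cdot O(\log B)$ (update, attempt) pairs preserves the $1-1/\mathrm{poly}(n)$ guarantee.

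\emph{Main obstacle.}
The step I expect to be hardest is the randomization-driven sharpening of the failure condition from the deterministic $\alpha_j^2$-relation to the stronger $\alpha_j^3$-relation; without it the amortization only yields $O(n/\sqrt{B})$ per update. Coupling the coin-flip randomness with the adversarial bad-vertex set $X_j(i)$---which depends on the graph dynamics but \emph{not} on the coin flips of the current attempt---so that the Chernoff estimate is applicable is the crux of the high-probability analysis.
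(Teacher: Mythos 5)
Your overall strategy is the paper's: charge the per-update cost $O(2^{j^*}n/B)$ from Lemma~\ref{lemma:single:typeB} against the telescoping potential $\sum_i \Delta_i < n^2$, via a relation of the form $\Delta_i = \Omega(2^{3j^*}\cdot n/B)$ obtained from the cluster occupancy guarantee; your H\"older finish is just a repackaging of the paper's split into updates with $\alpha_{j^*}<B^{1/3}$ (at most $\Theta(n)$ of them, $O(n/B^{2/3}+\sort(n)\log B)$ each) and a geometric sum over the rest. However, two steps are genuinely incomplete.

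First, the upgrade from the $\alpha_j^2$-relation to the $\alpha_j^3$-relation does not follow from cluster occupancy alone, and as written your implication is a non sequitur: from ``each non-last cluster holds $\Omega(\mu_j)$ lists'' and ``more than $\alpha_j n/B$ clusters were accessed'' you cannot conclude that $\Omega(\alpha_j\mu_j n/B)$ vertices have a large level drop, because the $\Omega(\mu_j)$ vertices co-located with a triggering vertex need not themselves satisfy $\Delta d_i>\alpha_j$. The missing ingredient is the diameter argument: every cluster has diameter at most $\mu_j=\alpha_j/4$ in $G_{i-1}$ (hence in $G_i$), so if the triggering vertex $v$ has $d_{i-1}(v)-d_i(v)>\alpha_j$, then every $v'$ in the same cluster has $d_{i-1}(v')-d_i(v')>\alpha_j-2\mu_j\ge \alpha_j/2$. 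This is why the chunk size is set to a quarter of the advance; without it the argument stalls at $\Delta_i=\Omega(\alpha_j^2 n/B)$ and the $O(n/\sqrt{B})$ bound you computed.

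Second, the high-probability mechanism is wrong. A Chernoff bound on the total occupancy of the accessed clusters within a single attempt has expectation $\Theta(2^{2j}n/B)$, so the failure probability is only $e^{-\Omega(2^{2j}n/B)}$ --- a constant unless one additionally assumes $n/B=\Omega(\log n)$, which the model does not guarantee; your union bound over $(i,j)$ pairs therefore does not deliver $1-1/\mathrm{poly}(n)$. (There is also a conditioning issue: which clusters get accessed depends on the coin flips $r(\cdot)$ that determine the occupancies.) The paper instead settles for a \emph{constant} success probability per update --- a ``large $j$-yield'' $\Delta_i\ge 2^{3j+5}n/B$ holds with probability at least $1/2$ --- and then applies Chernoff \emph{across} the updates that succeed at the same attempt level $j$: among any $k_j=4nB/2^{3j+5}=\Omega(n/\sqrt{B})=\Omega(\sqrt{n})\ge 16c\ln n$ such updates, at least a quarter have a large $j$-yield with probability $1-n^{-c}$, which bounds their number by $O(nB/2^{3j})$. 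That aggregation step is where the high-probability guarantee actually comes from, and your proposal needs to be restructured around it.
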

\begin{proof}
Recall that $d_i(v)$, $i\ge 0$, stands for the BFS level of node $v$ 
if it can be reached from the source node~$s$ in $G_i$ and $n$ otherwise.
If upon the $i$-th update the dynamic algorithm issues an explicit fetch for
the adjacency-lists of some vertex~$v$ kept in some cluster $C$ then this is because
$\Delta d_i(v)=d_{i-1}(v) - d_i(v) > \alpha$ for the current advance~$\alpha$.
Note that for all other vertices $v'\in C$, there is a path of length at most $\mu$ in $G_{i-1}$, implying that $|d_{i-1}(v') - d_{i-1}(v)| \leq \mu$ as well as $|d_i(v) - d_i(v')| \leq \mu$. Having current chunk size $\mu=\alpha/4$, this implies
\begin{eqnarray*}
\Delta d_i(v')&=& d_{i-1}(v') - d_i(v')\\
 &=& d_{i-1}(v') - d_{i-1}(v) + d_{i-1}(v) - d_i(v) + d_i(v) - d_i(v')\\
 &>& \alpha-2 \mu \\
 &\ge& \alpha/2 . 
\end{eqnarray*}

If the $i$-th update needs $j$ attempts to succeed then, during the (failing) attempt
$j-1$, it has tried to explicitly access $\alpha_{j-1} \cdot n/B+1$ distinct clusters. 
Out of these at least $\alpha_{j-1} \cdot n/B=2^{j+4} \cdot n/B$ clusters carry 
an expected amount of at least $\mu_{j-1}/8=2^{j-1}$ adjacency-lists each. 
This accounts for an expected
number of at least $2^{2\cdot j+3} \cdot n/B$ distinct vertices, each of them
featuring $\Delta d_i(\cdot) \ge \alpha_{j-1}/2=2^{j+3}$.
With probability at least $1/2$ we actually get at least half
%
of the expected
amount of distinct vertices/adjacency-lists, i.e., $2^{2\cdot j+2} \cdot n/B$.
Therefore, using the definitions $D_i=\sum_{v \in V \setminus \{ s\}} d_i(v)$ and $\Delta D_i= |D_{i-1} -D_i|$, if the $i$-th update succeeds within the
$j$-th attempt we have $\Delta D_i \ge 2^{3\cdot j+5} \cdot n/B =: Y_j$ with probability at
least $1/2$. 
Let us call this event {\em a large $j$-yield}. 

Since each attempt uses a new clustering with independent choices for $r(\cdot)$,
if we consider two updates~$i'$ and~$i''$ that succeed after the same number of
attempts~$j$, then both $i'$ and $i''$ have a large yield with probability at 
least $1/2$, independent of each other.
Therefore, we can use Chernoff bounds \cite{79799} in order to show that out
of $k \ge 16 \cdot c \cdot \ln n$ updates that all succeed within their $j$-th
attempt, at least $k/4$ of them have a large $j$-yield with probability at least
$1-n^{-c}$ for an arbitrary positive constant~$c$.
Subsequently we will prove an upper bound on the total number of large $j$-yields that
can occur during the whole update sequence.

The quantity $\Delta D_i$ provides a global measure as for how much the BFS levels change after 
inclusion of the $i$-th edge from the update sequence. 
If there are $m'=\Theta(n)$ edge inserts in total, then
\[ n^2 > D_0 \ge D_1 \ge \ldots \ge D_{m'-1} \ge D_{m'} > 0.\]
A large $j$-yield means $\Delta D_i \ge Y_j$. Therefore, in the
worst case there are at most $n^2/Y_j=n^2/(2^{3\cdot j+5} \cdot n/B)=n\cdot B /2^{3\cdot j+5}$
large $j$-yield updates and -- according to our discussion above -- it needs at most
$k_j:=4 \cdot n\cdot B /2^{3\cdot j+5}$ updates that succeed within the $j$-th attempt to 
have at least $k_j/4$ large $j$-yield updates with high 
probability\footnote{We also need to verify that $k_j \ge 16 \cdot c \cdot \ln n$.
As observed before, the dynamic algorithm will not increase its advance and chunk 
size values beyond $O(\sqrt{B})$ implying $2^j=O(\sqrt{B})$. But then we have
$k_j=4 \cdot n \cdot B/2^{3 \cdot j+5}=\Omega(n/\sqrt{B})$ and 
$n/\sqrt{B}\ge n/\sqrt{M} \ge n/\sqrt{n} \ge 16 \cdot c \cdot \ln n$ for sufficiently
large~$n$.}.

For the last step of our analysis we will distinguish two kinds of Type~B updates:
those that finish using an advance value~$\alpha_{j^*} < B^{1/3}$ (Type~B1), and
the others (Type~B2). Independent of the subtype, an update costs
$O(\alpha_{j^*} \cdot n /B + \sort(n) \cdot \log B)$ $=$
$O(2^{j^*} \cdot n /B + \sort(n) \cdot \log B)$ I/Os by 
Lemma~\ref{lemma:single:typeB}.
Obviously, for an update sequence of $m'=\Theta(n)$ edge insertions
there can be at most $\Theta(n)$ updates of Type~B1, each of them
accounting for at most $O( n /B^{2/3} + \sort(n) \cdot \log B)$ I/Os.
As for Type~B2 updates we have already shown that with high probability
there are at most $O(n \cdot B/2^{3 \cdot j^*})$ updates that succeed with
advance value $\Theta(2^{j^*})$. Therefore, using Boole's inequality, the total
amount of I/Os for all Type~B2 updates is bounded by
\[O\left( \left(\sum_{g\ge 0} 
\frac{n \cdot B}{(B^{1/3} \cdot 2^g)^3} \cdot
\frac{B^{1/3} \cdot 2^g \cdot n}{B} \right) + n \cdot \sort(n) \cdot \log B  \right)=\] 
$O(n \cdot (n/B^{2/3} + \sort(n) \cdot \log B ))$ with high probability. 
\end{proof}

Combining the two lemmas of this section implies 

\begin{theorem}
For general sparse undirected graphs of initially $n$ nodes and $O(n)$ edges
and $\Theta(n)$ edge insertions, dynamic BFS can be solved using 
amortized $O(n/B^{2/3}+\sort(n)\cdot \log B)$ I/Os per update with high
probability. 
\end{theorem}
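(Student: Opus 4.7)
The plan is to combine the two preceding lemmas via a straightforward amortization argument. Since every update in the sequence is classified as either Type~A or Type~B (the classification being determined by a single connected-components computation), summing the totals from the two lemmas will give the overall cost of processing the full update sequence.

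First, I would invoke the first lemma of this section: the at most $n-1$ Type~A updates together consume $O(n \cdot \sort(n) \cdot \log B)$ I/Os deterministically. Then I would invoke the Type~B lemma: all Type~B updates together consume $O(n \cdot (n/B^{2/3} + \sort(n) \cdot \log B))$ I/Os with high probability. Adding these two bounds absorbs the Type~A contribution into the Type~B term, yielding a total cost of $O(n \cdot (n/B^{2/3} + \sort(n) \cdot \log B))$ I/Os for the entire update sequence. Since the sequence consists of $\Theta(n)$ insertions, dividing the total by the number of updates gives the claimed amortized per-update bound of $O(n/B^{2/3} + \sort(n) \cdot \log B)$.

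One subtle point worth spelling out concerns the probabilistic qualifier. The high-probability guarantee in the Type~B lemma already refers to the cumulative cost of \emph{all} Type~B updates as a single event (it is derived from a Chernoff bound on the number of large $j$-yields across the whole sequence), not as a per-update event. Consequently, no additional union bound over updates is needed when passing to the amortized statement, and the $n^{-c}$ failure probability from the Type~B analysis is inherited verbatim by the theorem for any constant $c$.

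Since the argument is essentially arithmetic at this stage, I do not expect a genuine obstacle; all the substantive work has already been done inside Lemma~\ref{lemma:single:typeB} and the two subsequent lemmas. The only things to verify are that the Type~A/Type~B classification is a partition of the updates (immediate from the connectivity check) and that the $\Theta(n)$ update count used for amortization is consistent with the $m'=\Theta(n)$ used inside the Type~B analysis, so that no hidden constants get absorbed into the wrong term.
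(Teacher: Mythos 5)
Your proposal is correct and matches the paper exactly: the paper's entire proof of the theorem is the single line ``Combining the two lemmas of this section implies,'' i.e.\ precisely the summation of the Type~A and Type~B totals followed by division by the $\Theta(n)$ updates that you carry out. Your added remarks on the high-probability qualifier and the partition of updates into Types A and B are accurate elaborations of what the paper leaves implicit.
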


\section{Decremental Version and Extensions.}
\label{extensions}
Having gone through the ideas of the incremental version, it is now close to trivial
to come up with a symmetric external-memory dynamic BFS algorithm for a sequence of edge
deletions: instead of pre-feeding adjacency-lists into 
using an {\em advance} of $\alpha_j$ levels, we now apply a {\em lag} of $\alpha_j$ levels.
Therefore, the adjacency-list for a vertex~$v$ is found in ${\mathcal H}$ as long as the 
deletion of the $i$-th edge does not increase $d_i(v)$ by more than~$\alpha$. Otherwise,
an explicit random access to the cluster containing $v$'s adjacency-list is issued later on.
All previously used amortization arguments and bounds carry through, the only difference being
that $d_i(\cdot)$ values may monotonically increase instead of decrease.  

Better amortized bounds can be obtained if $\omega(n)$ updates take place and/or
$G_0$ has $\omega(n)$ edges. Then we have the potential to amortize 
more random accesses per attempt, which leads to larger $j$-yields and
reduces the worst-case number of expensive updates. Consequently, we can reduce the
defining threshold between Type~B1 and Type~B2 updates, thus eventually 
yielding better amortized I/O bounds. Details we be provided in the full version
of this paper. 

Modifications along similar lines are in order if external-memory
is realized by flash disks~\cite{GalToledoSURVEY}: compared to hard disks, flash memory can sustain many more unstructured read I/Os per second but on the other hand flash memory usually offers less read/write bandwidth than hard disks. Hence, in algorithms like ours 
that are based on a trade-off between unstructured read I/Os and bulk 
read/write I/Os, performance can be improved by allowing more unstructured read I/Os
(fetching clusters) if this leads to less overall I/O volume (scanning hot pool entries).

\section{Conclusions}
\label{conclusions}
We have given the first non-trivial external-memory algorithm 
for dynamic BFS. Even though we obtain significantly
better I/O bounds than for the currently best static algorithm, there
are a number of open problems: first of all, our bounds dramatically
deteriorate for mixed update sequences (edge insertions and edge deletions
in arbitrary order and proportions); besides oscillation effects, a single
edge deletion (insertion) may spoil a whole chain of amortizations for
previous insertions (deletions).
Also, it would be interesting to see, whether our bounds can be
further improved or also hold for shorter update sequences.
Finally, it would be nice to come up with a deterministic version
of the modified clustering.

\subsection*{Acknowledgements}
We would like to thank Deepak Ajwani for very helpful discussions.

\bibliographystyle{plain}

\end{document}